\newtheorem{theorem}{Theorem}
\newtheorem{lemma}[theorem]{Lemma}
\newtheorem{remark}[theorem]{Remark}
\title{Entanglement of Quantum States which are Zero on the Symmetric Sector}
\author{D. D'Alessandro\thanks{Department of Mathematics, Iowa State University, Ames, IA 50011, daless@iastate.edu}}
\begin{document}

\maketitle

\begin{abstract}

We consider a quantum system of {\it $n$ qudits} and the Clebsch-Gordan decomposition of the associated Hilbert space $(\mathbb{C}^d)^{\otimes n}$. In this decomposition,  one of the subspaces is the so-called {\it  symmetric subspace} or {\it symmetric sector}, that is, the subspace of all states that  are invariant under the action of the symmetric group. We prove that  any separable state must have a nonzero component on the symmetric sector, or, equivalently, any state which has zero component on the symmetric sector must be entangled. For the cases of  $n=2,3$ particles, and in arbitrary dimension $d$, this 
 result can be refined  by providing sharp lower bounds on the size of the component of separable states on the symmetric sector. This leads us to identify a class of {\it entanglement witnesses} for these systems.  We provide an example showing that in the multipartite case, this  class of witnesses detects PPT entangled states. 

\end{abstract}

\vspace{0.5cm}

{\bf Keywords:} Geometry of Quantum States, Quantum Entanglement, Symmetric Sector, Entanglement Witnesses, Partial Transposition.

\vspace{0.5cm}

\section{Introduction}

The study of the geometry of quantum states is at the heart of quantum mechanics and quantum information theory. In this context, when considering multipartite quantum systems, the states are naturally classified in {\it separable} states, that is, {\it pure} states that can be written as products $|\psi\rangle=|\psi_1 \rangle  \otimes |\psi_2\rangle  \otimes \cdots \otimes |\psi_n\rangle $, where $|\psi_j\rangle$ is a state  of the $j$-th subsystem, and {\it entangled} states, that is, states that cannot be written as products.\footnote{The definition is a bit more elaborate for mixed states (see, e.g., \cite{NC}) and we shall consider it later in the paper.} Detecting 
 entangled states  and classifying them according to the type and amount of entanglement (using  a specific entanglement measure)  is one of the most studied and fundamental topics  in quantum information \cite{Hororev}. 

When considering a multipartite system of $n$, $d$-dimensional, subsystems ({\it qudits}), the Hilbert space ${\cal H}:=(\mathbb{C}^d)^{\otimes n}$,  splits according to the {\it Clebsch-Gordan (CG) decomposition} into irreducible representations of $SU(d)$. This decomposition, in the case of $d=2$ ({\it qubits}) is well studied in the theory of angular momentum addition (see, e.g., \cite{Sakurai}),  and it has several important applications in the general case  $d \geq 2$ (see, e.g., \cite{ArneAlex} and references therein). Irreducible representations of $SU(d)$ appear with various multiplicities in ${\cal H}$. One representation which always appears with multiplicity one is spanned by states which are invariant under the permutation group $S_n$. This is referred to as the {\it symmetric subspace} or {\it symmetric sector}. 

Symmetric states  have many applications in quantum information theory \cite{Harrow} and therefore they have attracted great attention in recent years both at the theoretical and at  the experimental level (see, e.g., \cite{QIC}, \cite{Hube}, \cite{Migdal}, \cite{Ribe1}). In particular, they may  arise as  ground states of  Hamiltonians in nuclear and trapped ion and atom physics such as the Lipkin-Meshkov-Glick Hamiltonian \cite{109} or as examples of violation of Bell's inequalities in quantum nonlocality \cite{primer}. They can also form a set of {\it structured data} in geometric quantum machine learning protocols \cite{Marco2}. Tests of entanglement and separability have been established for such states  (see, e.g., \cite{Miko1}, \cite{Qian}, \cite{Rutko}, \cite{Yu}). 

In this paper, we want to establish yet another property of the symmetric sector and therefore of the CG decomposition of ${\cal H}:=(\mathbb{C}^d)^{\otimes n}$. We look at the orthogonal complement of the symmetric sector in ${\cal H}$ and prove that it does not contain any separable state. It  only contains entangled states. Equivalently, any separable state in ${\cal H}$  must have a non zero component on the symmetric sector. We will prove this as a consequence of a general property of products of elements of a matrix and will establish it   first for pure states and then naturally extend  it to mixed states by convex combination.   For the cases  $n=3$ particles and arbitrary dimension $d$, we   shall obtain a more detailed result,  giving a (nonzero) sharp {\it lower bound} on the size of the component of any separable state on the symmetric sector. A bound for the bipartite case was already known (see, e.g., \cite{EWreview}) and we shall recall such a result also. This will lead us to define a family of {\it entanglement witnesses}  for these systems. We will show with an example that this class of entanglement witnesses can detect entangled states that cannot be detected using the standard partial transposition test.

The paper is organized as follows. In section \ref{SSS},  we describe the states in the symmetric sector and in particular  the  {\it Dicke states} which form an orthonormal  basis of this subspace of  ${\cal H}:=(\mathbb{C}^d)^{\otimes n}$. 
In section \ref{ATL} we prove a technical linear algebra lemma which is the key step to prove the above described property of the symmetric sector. This property  is proved in section \ref{MR}. In section \ref{N23}, we obtain  sharp lower bounds for the component of any separable state on the symmetric sector for the cases of $n=2$ and $n=3$. Bounds for $n>3$ can be obtained numerically after setting up an appropriate multivariable optimization problem and we give an example in the appendix for the case $n=4$, $d=2$. The whole treatment is carried  out for {\it pure} states, {but the extension to mixed states is natural by convex combination and it is discussed in section \ref{MRex}.} In section \ref{EW3} we interpret the above bounds as specifying {\it entanglement witnesses}. After setting up the appropriate definitions, we use an example to show that such a class of entanglement witnesses are not {\it decomposable}  and detect { states that cannot be detected using the (multipartite version) of the standard {\it Peres-Horodecki partial transposition criterion} (see, e.g., \cite{90}), that is,  Positive Partial Transposed (PPT) entangled states}. 

\section{The Symmetric Sector and Dicke States}\label{SSS}

When dealing with systems in quantum information theory, it is customary to work with an orthonormal  basis for $\mathbb{C}^d$, often referred to 
as the {\it computational basis},  given by 
$\{ |0\rangle, |1 \rangle,...,|d-1\rangle \}$. Accordingly, an orthonormal  basis of 
${\cal H}:=(\mathbb{C}^d)^{\otimes n}$ is  given by $|j_1 \, j_2 \, \cdots j_n\rangle$, with $j_1,j_2,...,j_n \in \{0,1,...,d-1\}$. The {\it symmetric sector} is 
the subspace of states invariant under any element of the permutation group $S_n$. 

Consider now the set $\Delta_{d,n}$ of $d$-tuples $[k_0,k_1,...,k_{d-1}]$ with natural  numbers $k_j \geq 0$ and $k_0+k_1+ \cdots + k_{d-1}=n$. We have  
that $|\Delta_{d,n}|=\begin{pmatrix} n+d-1 \cr d-1\end{pmatrix}$. For any element $[k_0,...,k_{d-1}]$ in $\Delta_{d,n}$ denote by $G_{[k_0,...,k_{d-1}]}$ the set of $n$-ples, $(j_1,,...,j_n)$ with $k_0$ $0$'s, $k_1$, $1$'s,...,$k_{d-1}$, $d-1$'s. For example set $d=3$, $n=4$,  and consider 
$[1,0,3] \in \Delta_{3,4}$. Then every $4$-ple with one  `$0$'  and three `$2$'s  and zero $1$'s belongs to $G_{[1,0,3]}$. For example $(0,2,2,2)$, but also $(2,0,2,2)$ etc. The set $G_{[k_0,k_1,...,k_{d-1}]}$ contains $\left| G_{[k_0,k_1,...,k_{d-1}]} \right|=\frac{n!}{k_0!k_1!\cdots k_{d-1}!}$ elements. To any $d$-tuple 
$[k_0,k_1,...,k_{d-1}] \in \Delta_{d,n}$ we associate a {\it Dicke state}, $|\phi_{[k_0,k_1,...,k_{d-1}]} \rangle $,  defined as
\begin{equation}\label{definitionDicke}
|\phi_{[k_0,k_1,...,k_{d-1}]}\rangle := \frac{1}{\sqrt{|G_{[k_0,k_1,...,k_{d-1}]}|}} \sum_{(j_1,...,j_n) \in G_{[k_0,k_1,...,k_{d-1}]}} |j_1,...,j_n \rangle=
\end{equation}
$$
 \sqrt{\frac{k_0! k_1! \cdots k_{d-1}!}{n!} } \sum_{(j_1,...,j_n) \in G_{[k_0,k_1,...,k_{d-1}]}} |j_1,...,j_n \rangle. 
$$ 
For example, for $d=3$, $n=3$, the state $|\phi_{[1,0,2]}\rangle $ is 
$
|\phi_{[1,0,2]} \rangle :=\frac{1}{\sqrt{3}} \left( |0 \,  2 \, 2 \rangle + |2 \,  0 \, 2 \rangle + |2 \, 2 \, 0 \rangle \right).  
$
Dicke states form an orthonormal  basis of the symmetric sector, which has therefore dimension $|\Delta_{d,n}|=\begin{pmatrix} n+d -1 \cr d-1 \end{pmatrix} $. This is the basis we shall refer to in the following.

In terms of the Dicke states, we can write the {\it orthogonal projection}, $\Pi$,  onto the symmetric sector  as,  
\begin{equation}\label{Proiezione}
\Pi:=\sum_{[k_0,k_1,...,k_{d-1}] \in \Delta_{d,n}} |\phi_{[k_0,k_1,...,k_{d-1}]} \rangle \langle \phi_{[k_0,k_1,...,k_{d-1}]} |. 
\end{equation}
{The projection can also be written in terms of the permutations in the symmetric group $S_n$, that is,  as 
\begin{equation}\label{Proiezione2}
\Pi=\frac{1}{n!} \sum_{S \in S_n} S=\frac{1}{n!} \left(  \sum_{S \in S_n} \, \sum_{(j_1,...,j_n) \in \{0,1,...,d-1\}} |S(j_1,...,j_n)\rangle \langle j_1,...,j_n| \right).  
\end{equation}
}

\section{A Technical Lemma}\label{ATL}

Consider now a (real or complex) $n \times d$ matrix $F_{n \times d}$ with entries $f_{l,m}$, $l=1,2,...,n$, $m=0,1,...,d-1$, and consider for  $F_{n \times d}$   the following property. 

\vspace{0.25cm}

{\bf Property A} For any element $[k_0,k_1,...,k_{d-1}] \in \Delta_{d,n}$ we have 
\begin{equation}\label{PropeA}
\sum_{(j_1,j_2,...,j_n) \in G_{[k_0,k_1,...,k_{d-1}]}} f_{1,j_1} f_{2,j_2}\cdots f_{n,j_n}= 0. 
\end{equation}

If a matrix $F_{n \times d}$ has a zero row, then, automatically it satisfies Property A, since the products that appear in the sums in (\ref{PropeA}) always have one zero element, the one corresponding to the zero row. The following lemma shows that the converse fact is also true, that is, if a matrix satisfies Property A then one of its rows must be zero. Notice also (something we will use)  that both properties are independent of the ordering of the rows, or, in other terms, they are invariant 
if we replace $F_{n \times d}$ by $P F_{n \times d}$ where $P$ is a permutation matrix, since this would only change the order in the products in (\ref{PropeA}). 

\begin{lemma}\label{CombLem}
Assume a matrix $F_{n \times d}$ has Property A. Then $F_{n \times d}$ has  one zero row.
\end{lemma}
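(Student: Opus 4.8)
The plan is to encode Property A as the statement that a certain product of linear forms vanishes identically, and then invoke the fact that a polynomial ring over a field is an integral domain. Introduce $d$ commuting indeterminates $x_0, x_1, \ldots, x_{d-1}$ and, for each row index $l = 1, \ldots, n$, form the linear form $p_l := \sum_{m=0}^{d-1} f_{l,m} x_m$. Consider the product $P := \prod_{l=1}^n p_l \in \mathbb{C}[x_0,\ldots,x_{d-1}]$ (or $\mathbb{R}[x_0,\ldots,x_{d-1}]$ in the real case).

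Expanding $P$, a monomial is obtained by choosing an index $j_l \in \{0,\ldots,d-1\}$ from each factor $p_l$, contributing $f_{1,j_1} f_{2,j_2}\cdots f_{n,j_n}\, x_{j_1} x_{j_2}\cdots x_{j_n}$; the monomial $x_{j_1}\cdots x_{j_n}$ equals $x_0^{k_0} x_1^{k_1}\cdots x_{d-1}^{k_{d-1}}$ precisely when $(j_1,\ldots,j_n) \in G_{[k_0,\ldots,k_{d-1}]}$. Hence the coefficient of $x_0^{k_0} x_1^{k_1}\cdots x_{d-1}^{k_{d-1}}$ in $P$ is exactly the left-hand side of (\ref{PropeA}). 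Property A therefore says that every coefficient of $P$ vanishes, i.e., $P = 0$ in the polynomial ring. Since this ring is an integral domain, a product of the $p_l$ can be the zero polynomial only if some factor $p_{l_0}$ is the zero polynomial; and $p_{l_0} = \sum_m f_{l_0,m} x_m = 0$ forces $f_{l_0,m} = 0$ for all $m = 0,\ldots,d-1$, i.e., the $l_0$-th row of $F_{n\times d}$ is zero, which is the claim.

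The argument is essentially immediate once the polynomial reformulation is in place, so there is no serious obstacle; the only point requiring care is the bookkeeping that identifies the coefficient of each monomial $x_0^{k_0}\cdots x_{d-1}^{k_{d-1}}$ with the corresponding sum over $G_{[k_0,\ldots,k_{d-1}]}$ — and, as the excerpt already notes, the hypothesis is insensitive to reordering the rows, so no ordering issue arises. An alternative, more pedestrian route would be induction on $n$: fix the last row, split the sum in (\ref{PropeA}) according to the value of $j_n$, and peel off a factor; but the generating-polynomial argument is cleaner and I would present that one.
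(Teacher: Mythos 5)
Your proof is correct, and it takes a genuinely different route from the paper's. The key identity is right: writing $p_l:=\sum_{m=0}^{d-1}f_{l,m}x_m$ and expanding $P:=\prod_{l=1}^n p_l=\sum_{j_1,\dots,j_n}f_{1,j_1}\cdots f_{n,j_n}\,x_{j_1}\cdots x_{j_n}$, the terms contributing to the monomial $x_0^{k_0}\cdots x_{d-1}^{k_{d-1}}$ are exactly those indexed by $G_{[k_0,\dots,k_{d-1}]}$, so Property A is precisely the statement $P=0$ in $\mathbb{C}[x_0,\dots,x_{d-1}]$, and the integral-domain property forces some linear factor, hence some row of $F_{n\times d}$, to vanish. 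The paper instead proves the lemma by a double induction: an outer induction on $d$, with an inner induction on the row index showing that if no row is zero then the entire first column must vanish, after which the problem is reduced to the last $d-1$ columns. Your generating-polynomial argument is shorter and more conceptual, and it makes the $d=1$ base case ($f_{1,0}\cdots f_{n,0}=0$ forces a factor to vanish) visibly the same phenomenon as the general case; what the paper's elementary argument buys is self-containedness, using nothing beyond the fact that a product of scalars vanishes only if a factor does. One small point worth making explicit if you write this up: the fact that $\mathbb{C}[x_0,\dots,x_{d-1}]$ has no zero divisors is standard, but if you prefer to avoid citing it you can argue instead that if no $p_l$ were the zero form, each would vanish only on a proper hyperplane of $\mathbb{C}^d$, and a finite union of proper hyperplanes cannot cover $\mathbb{C}^d$, so $P$ would be nonzero somewhere.
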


\begin{proof}
The proof is by induction on $d$. For $d=1$ the statement is true. In this case $\Delta_{1,n}$ has only one element $[n]$ and $G_{[n]}$ has only the element $(0,0,...,0)$ with $n$ zeros. Property A has only one condition 
\begin{equation}\label{all0}
f_{1,0}f_{2,0}\cdots f_{n,0}=0, 
\end{equation}
which implies that one among $\{f_{1,0},f_{2,0},...,f_{n,0}\}$ is zero. Since each one of these entries is the only entry in a row (there is only one column), the base step is proven.

We now assume the Lemma to be valid for $d-1$. In order to prove it for general $d$ we prove the following {\bf Claim:}  If $F_{n \times d}$ has no zero row we must have 
$f_{1,0}=f_{2,0}=\cdots=f_{n,0}=0$, that is, the first column is zero. To prove this, we prove by induction on $l$ that $f_{l,0}=0$, for $l=1,...,n$. We start by using property (\ref{PropeA}) for $[n,0,...,0]$ which gives, with the only element $(0,0,...,0) \in G_{[n,0,...,0]}$,   formula (\ref{all0}). Therefore one among $\{f_{1,0},f_{2,0},...,f_{n,0}\}$ is zero and we choose $f_{1,0}$ since we can always reorder the rows. 
This gives the base step. Assume now that $f_{1,0}=f_{2,0}=\cdots=f_{l,0}=0$, for $l$,  $n>l\geq 1$. We want to show that $f_{l+1,0}=0$. Consider the conditions (\ref{PropeA}) for an element of $\Delta_{d,n}$ of the form  $[n-l, k_1,k_2,...,k_{d-1}]$ with $k_1+k_2+\cdots+ k_{d-1}=l$. Every multi-index $(j_1,j_2,...,j_n)$ in the sum has $n-l$ zeros. All the elements that have a $0$ in the first $l$ positions give a zero element in the sum, since by the inductive assumption, we have $f_{1,0}=f_{2,0}=\cdots=f_{l,0}=0$. The only possibly nonzero terms are the ones where  the zeros appear all in the last $n-l$ position. Therefore the sum can be written as 
$$
\sum_{(j_1,j_2,...,j_n) \in G_{[n-l,k_1,...,k_{d-1}]}} f_{1,j_1} f_{2,j_2}\cdots f_{l,j_l} f_{l+1,j_{l+1}} \cdots f_{n,j_n}=$$
$$\sum_{(j_1,j_2,...,j_l,0,...,0) \in G_{[n-l,k_1,...,k_{d-1}]}} f_{1,j_1} f_{2,j_2}\cdots f_{l,j_l} f_{l+1,0} \cdots f_{n,0}=
$$
$$f_{l+1,0} \cdots f_{n,0}\left( \sum_{(j_1,...,j_l)\in G_{[k_1,...,k_{d-1}]}} f_{1,j_1}\cdots f_{l,j_l} \right)=0. 
$$
Notice that, in the second to last element of the above string of equalities, $[k_1,...,k_{d-1}]$ is an element of $\Delta_{d-1,l}$. Since the above equalities have to hold for any element of  $\Delta_{d-1,l}$, either $f_{l+1,0} \cdots f_{n,0}=0$ or $\sum_{(j_1,...,j_l)\in G_{[k_1,...,k_{d-1}]}} f_{1,j_1}\cdots f_{l,j_l} =0$ for any $[k_1,...,k_{d-1}] \in \Delta_{d-1,l}$. However, in the second case, this would mean that the matrix obtained from $F_{n \times d}$ selecting the first $l$ rows and neglecting the column $0$ has, by the inductive assumption applied to $d-1$,  a zero row. This would mean, since $f_{1,0}=f_{2,0}=\cdots=f_{l,0}=0$,  that the original matrix has a zero row, which we have excluded (and would conclude the proof of the lemma). Therefore we have 
\begin{equation}\label{firstcolumn}
f_{1,0}=f_{2,0}=\cdots=f_{n,0}=0. 
\end{equation}
Now consider the equations (\ref{PropeA}) corresponding to elements $[0,k_1,...,k_{d-1}]$ in $\Delta_{d,n}$, that is,  the elements $(j_1,...,j_n) \in G_{[0,k_1,...,k_{d-1}]}$ are $n$-ples which have no $0$'s. We have, specializing (\ref{PropeA}),  
$$
0=\sum_{(j_1,...,j_n)\in G_{[0,k_1,...,k_{d-1}]}} f_{1,j_1}\cdots f_{n,j_n}=\sum_{(j_1,...,j_n)\in G_{[k_1,...,k_{d-1}]}}f_{1,j_1}\cdots f_{n,j_n}. 
$$
In the last sum, $[k_1,...,k_{d-1}]$ is an arbitrary element of $\Delta_{d-1,n}$. Therefore, by the inductive assumption, the matrix $\{ f_{l,m}\}$, $l=1,...,n$, $m=1,...,d-1$ has one zero row. This matrix is given by the last $d-1$ columns of $F_{n \times d}$. Since the first column is also zero from (\ref{firstcolumn}), $F_{n\times d}$ has one zero row, and the lemma  is proved. 

\end{proof}


\section{Entanglement of States that are Zero on the Symmetric Sector}\label{MR}

Assume now that we have a separable state $|\psi\rangle$ in ${\cal H}$, that is, 
$|\psi \rangle:=|\psi_1\rangle \otimes |\psi_2 \rangle \otimes \cdots \otimes |\psi_n\rangle$. Writing each $|\psi_m \rangle$, $m=1,...,n$, in the computational basis as $|\psi_m \rangle :=\sum_{j_m=0}^{d-1} f_{m,j_m}|j_m\rangle$, we can write 
\begin{equation}\label{psi}
|\psi \rangle = \sum_{j_1,j_2,...,j_n=0}^{d-1} f_{1,j_1} f_{2,j_2} \cdots f_{n,j_n} |j_1 \rangle \otimes |j_2 \rangle \otimes \cdots \otimes |j_n\rangle.  
\end{equation}  
Now consider the inner product of  $|\psi\rangle $ with a Dicke state $|\phi_{[k_0,k_1,...,k_{d-1}]} \rangle$ in (\ref{definitionDicke}).  We have 
\begin{equation}\label{toBeUs}
\sqrt{|G_{[k_0,k_1,...,k_{d-1}]}|} \langle \phi_{[k_0, k_1,...,k_{d-1}]} |\psi \rangle=
\end{equation}
$$ \left( \sum_{(l_1,...,l_n)\in G_{[k_0,k_1,...,k_{d-1}]}} \langle l_1| \otimes \langle l_2| \otimes \cdots \otimes \langle l_n|  \right)
\left(  \sum_{j_1,...,j_n=0}^{d-1} f_{1,j_1}  \cdots f_{n,j_n} |j_1 \rangle \otimes |j_2 \rangle \otimes \cdots \otimes |j_n\rangle.\right)= 
$$
$$
\sum_{(l_1,...,l_n)\in G_{[k_0,k_1,...,k_{d-1}]}} \sum_{j_1,...,j_n=0}^{d-1}
 f_{1,j_1} \cdots f_{n,j_n} \delta_{l_1,j_1} 
 \delta_{l_2,j_2} \cdots \delta_{l_n, j_n}= 
 $$
 $$
 \sum_{(j_1,...,j_n) \in G_{[k_0,k_1,...,k_{d-1}]}}f_{1,j_1} \cdots f_{n,j_n}. 
$$
Using this, if the separable state $|\psi \rangle $ has zero component in the symmetric sector, since all the inner products $\langle \phi_{[k_0, k_1,...,k_{d-1}]} |\psi \rangle$ are zero,  the coefficients 
$f_{1,j_1}, f_{2,j_2},...,f_{n,j_n}$, $j_1,j_2,...,j_n=0,1,...,d-1$ must  satisfy Property A with (\ref{PropeA}). Therefore, we can apply Lemma \ref{CombLem} to conclude that for at least one $k=1,...,n$, $f_{k,j_k}=0$, for all $j_k=0,1,...,d-1$. This means that one of the factors,  $|\psi_k\rangle$, $k=1,...,n$, in 
the factorization $|\psi\rangle =|\psi_1 \otimes |\psi_2\rangle \otimes \cdots \otimes |\psi_n\rangle$ is zero. However this is impossible for a meaningful quantum state $ |\psi\rangle $. Therefore, we have proven any one of the equivalent statements in the following theorem. 

\begin{theorem}\label{MainT}

\begin{enumerate}

\item The orthogonal complement of the symmetric sector in ${\cal H}:=(\mathbb{C}^d)^{\otimes n}$ contains no separable state. 

\item Any separable state in ${\cal H}:=(\mathbb{C}^d)^{\otimes n}$ must have a nonzero orthogonal component on the symmetric sector. 

\item Any quantum state that has zero component on the symmetric sector must be entangled. 

\end{enumerate}

\end{theorem}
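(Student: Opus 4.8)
The plan is to reduce the statement about quantum states to the purely linear-algebraic Lemma \ref{CombLem}, which has already been established. The theorem has three equivalent formulations, so it suffices to prove any one of them; I would argue the third (and most concrete) one directly, and then note the equivalences are immediate. So suppose $|\psi\rangle \in {\cal H}:=(\mathbb{C}^d)^{\otimes n}$ is a (nonzero) separable pure state with zero component on the symmetric sector, and derive a contradiction.

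First I would write $|\psi\rangle = |\psi_1\rangle \otimes \cdots \otimes |\psi_n\rangle$ and expand each factor $|\psi_m\rangle = \sum_{j_m=0}^{d-1} f_{m,j_m} |j_m\rangle$ in the computational basis, so that the coefficients $f_{m,j}$ assemble into an $n \times d$ matrix $F_{n\times d}$. The key computation is to evaluate the inner product $\langle \phi_{[k_0,\dots,k_{d-1}]} | \psi \rangle$ of $|\psi\rangle$ with each Dicke state: using the definition \eqref{definitionDicke} and orthonormality of the computational basis, one finds that $\sqrt{|G_{[k_0,\dots,k_{d-1}]}|}\,\langle \phi_{[k_0,\dots,k_{d-1}]} | \psi \rangle$ equals exactly the sum $\sum_{(j_1,\dots,j_n)\in G_{[k_0,\dots,k_{d-1}]}} f_{1,j_1}\cdots f_{n,j_n}$ appearing in Property A. Since the Dicke states form an orthonormal basis of the symmetric sector, the hypothesis that $|\psi\rangle$ has zero component there means all these inner products vanish, so $F_{n\times d}$ satisfies Property A.

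Then I would invoke Lemma \ref{CombLem}: $F_{n\times d}$ must have a zero row, say row $k$. But a zero row means $f_{k,j}=0$ for all $j=0,\dots,d-1$, i.e. $|\psi_k\rangle = 0$, which contradicts $|\psi\rangle$ being a genuine (normalized, hence nonzero) quantum state. This establishes statement 3. For the equivalences: statement 3 says every state with zero symmetric component is entangled, which is the contrapositive of "every separable state has nonzero symmetric component" (statement 2); and statement 2 is just the restatement of statement 1 ("the orthogonal complement of the symmetric sector contains no separable state"), since having zero component on the symmetric sector is the same as lying in its orthogonal complement.

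I do not expect a serious obstacle here, since the heavy lifting is done by Lemma \ref{CombLem}; the only point requiring care is the inner-product computation \eqref{toBeUs}, where one must keep track of the normalization factor $1/\sqrt{|G_{[k_0,\dots,k_{d-1}]}|}$ and verify that the overlap collapses — via the Kronecker deltas — precisely to the Property-A sum over $G_{[k_0,\dots,k_{d-1}]}$ rather than some reweighted variant. A minor subtlety worth flagging is the scope of the claim: the argument as given treats pure separable states, and the extension to separable mixed states (convex combinations of product states) requires a separate, though easy, convexity argument, which the paper defers to a later section.
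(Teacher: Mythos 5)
Your proposal follows the paper's argument exactly: expand the product state in the computational basis to form the $F$-matrix, compute the overlap with each Dicke state to see that vanishing of the symmetric-sector component is precisely Property A, and invoke Lemma \ref{CombLem} to force a zero row, contradicting the nonvanishing of each factor. The inner-product computation and the normalization bookkeeping you flag are handled identically in the paper's equation (\ref{toBeUs}), and the deferral of the mixed-state case to a convexity argument matches the paper's section \ref{MRex}.
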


This result may be suggested by and generalizes the situation in the two qubits case where the symmetric sector is spanned by 
$\{ |0,0\rangle, \frac{1}{\sqrt{2}} (|0,1\rangle+ |1,0\rangle), |1,1\rangle\}$ and its orthogonal complement is the antisymmetric subspace spanned by 
$\frac{1}{\sqrt{2}}(|0,1\rangle -|1,0\rangle)$, an  entangled state. 

Some remarks that will be used in the following sections are in order. The $n \times d$ matrix $F$ with entries $f_{m,j}$, $m=1,...,n$, $j=0,1,...,d-1$, can be used to represent an $n$-ple of states $\{ |\psi_m \rangle \}$, $m=1,...,n$,  which appear in the product $\otimes_{m=1}^n |\psi_m\rangle$ for a separable state. Every row represents the components of each of the factor states  in the computational basis. We shall refer to this matrix as the $F$-matrix associated to the separable state $\otimes_{m=1}^n |\psi_m \rangle$.   Rows can be modified by multiplying all the elements by a phase factor $e^{i\phi}$, without modifying the physical state. The component of the product state $|\psi\rangle$ onto the Dicke state $|\phi_{[k_0,k_1,...,k_{d-1}]}\rangle$ is given, according to formula (\ref{toBeUs}), 
by 
\begin{equation}\label{accor}
\langle \phi_{[k_0,k_1,...,k_{d-1}]} | \psi\rangle=\sqrt{\frac{k_0! k_1! \cdots k_{d-1}!}{n!}}  \sum_{(j_1,...,j_n) \in G_{[k_0,k_1,...,k_{d-1}]}}f_{1,j_1} f_{2,j_2}\cdots f_{n,j_n}.  
\end{equation}
Thus the square of the norm of the  orthogonal component onto the symmetric sector,  $\| |\psi \rangle \|_{SS}^2 $,  is
\begin{equation}\label{normsquare}
\| |\psi \rangle \|_{SS}^2 =\sum_{[k_0,k_1,...,k_{d-1}]\in \Delta_{d,n}} \frac{k_0! k_1! \cdots k_{d-1}!}{n!}\left| \sum_{(j_1,j_2,...,j_n) \in G_{[k_0,k_1,...,k_{d-1}]}}f_{1,j_1} f_{2,j_2}\cdots f_{n,j_n}
\right|^2.
\end{equation}
{The quantity $ \| |\psi \rangle \|_{SS}^2$ can also be written in terms of the expression (\ref{Proiezione2}) for the projection $\Pi$, i.e., 
\begin{equation}\label{anotherway}
\| |\psi \rangle \|_{SS}^2=\langle \psi | \Pi | \psi \rangle=\frac{1}{n!} \sum_{S \in S_n} \langle \psi | S | \psi \rangle. 
\end{equation}
To compute $\langle \psi | S | \psi \rangle$ in the case where $|\psi\rangle:=|\psi_1 \rangle \otimes |\psi_2 \otimes \cdots \otimes |\psi_n\rangle$ we can use the fact that every permutation $S$ is the product of disjoint cycles, $C_1,C_2,...,C_m$ so that, 
$\langle \psi | S | \psi \rangle=\prod_{j=1}^m \langle \psi | C_j |\psi \rangle$, where if $C$ is the cycle $C=(i_1 \, i_2\, \cdots \, i_k)$, $\langle \psi | C |\psi \rangle=\langle \psi_{i_1}|\psi_k \rangle \langle \psi_{i_2}|\psi_{i_1} \rangle \langle \psi_{i_3}|\psi_{i_2} \rangle \cdots \langle \psi_k | \psi_{i_{k-1}}\rangle$. In the following section we shall calculate the minimum value for $\| |\psi \rangle\|_{SS}^2$ for $n=2,3$ and we  shall use mostly the method based on  (\ref{normsquare}). In the appendix we shall consider the case $n=4$, $d=2$ and we shall use the expression (\ref{anotherway}). This makes a connection with the structure theory of the symmetric group $S_n$. 
}
If $X \in U(d)$ and we define 
$Y:=X \otimes X \otimes \cdots \otimes X$, the tensor product of $n$ equal factors, we have  $Y^\dagger \Pi Y=\Pi$.\footnote{To see this, just apply left hand side and right hand side to an arbitrary vector in $(\mathbb{C}^d)^{\otimes n}$ written as the sum of its components in the symmetric sector and in the orthogonal complement and use the fact that both these spaces are invariant under $Y$ and $\Pi$ is the identity on the symmetric sector and zero on its orthogonal complement.} If $|\psi \rangle$ is separable, we can choose $X$ to make the first factor equal to $|0\rangle$, and then the second factor of the form $\alpha |0 \rangle + \beta |1\rangle$ (with $\alpha$ and $\beta$ real), and then the third factor of the form $\alpha |0\rangle + \beta |1\rangle + \gamma |2 \rangle$ and  (possibly) so on, without loss of generality.

\section{Lower bounds for $\| | \psi \rangle \|_{SS}^2$ for separable $|\psi \rangle$}\label{N23}

\subsection{Case $n=2$} 
The bipartite case $n=2$ is treated as an example in \cite{EWreview}   (Example 3.1).  In this case $\Pi$ in (\ref{Proiezione2}) is $\Pi=\frac{1}{2} \left( {\bf 1} + (1 \,2) \right)$, where $(1 \,2)$ is the {\it SWAP operator} permuting the first and second factor. If $|\psi\rangle$ is separable, $|\psi \rangle =|\psi_1 \rangle \otimes |\psi_2 \rangle$, then 
$$
\| | \psi \rangle \|_{SS}^2= \langle \psi | \Pi |\psi \rangle=\frac{1}{2} +\frac{1}{2} |\langle \psi_1 | \psi_2 \rangle |^2 \geq \frac{1}{2}, 
$$
where equality is achieved when $|\psi_1\rangle $ and $|\psi_2\rangle$ are orthogonal to each other. 

\subsection{Case $n=3$; $d=2$} 

To compute the minimum $\||\psi \rangle \|_{SS}^2$ for the case $n=3$, $d=2$, we shall use the expression (\ref{normsquare}) for $\| | \psi \rangle \|_{SS}^2$. 
According to the discussion at the end of the previous section, we can apply a common unitary on the qubits and collect the phase factors, so that the resulting $F-$matrix  has, without loss of generality, the form, 
\begin{equation}\label{PPOL}
F=\begin{pmatrix} 1 & 0 \cr 
\cos(\theta) & \sin(\theta) \cr 
\cos(\eta) & \sin(\eta) e^{i \alpha}
\end{pmatrix}.
\end{equation}
Following (\ref{accor}) The components along the Dicke states in this case are 
$$
\langle \phi_{[3,0]} |\psi \rangle= f_{1,0}f_{2,0} f_{3,0} =\cos(\theta) \cos(\eta), 
$$
$$\langle \phi_{[2,1]}|\psi \rangle=\frac{1}{\sqrt{3}} \left( f_{1,1} f_{2,0} f_{3,0}+  
f_{1,0} f_{2,1} f_{3,0}+f_{1,0} f_{2,0} f_{3,1} \right)=\frac{1}{\sqrt{3}} \left( \sin(\theta) \cos((\eta)+ \cos(\theta) \sin(\eta) e^{i\alpha} \right), 
$$
$$
 \langle \phi_{[1,2]}|\psi \rangle=\frac{1}{\sqrt{3}} \left( f_{1,1} f_{2,1} f_{3,0}+  
f_{1,0} f_{2,1} f_{3,1}+f_{1,1} f_{2,0} f_{3,1} \right)=\frac{1}{\sqrt{3}} \sin (\theta) \\sin(\eta) e^{i \alpha}, 
$$
$$
\langle \phi_{[0,3]} |\psi \rangle= f_{1,1}f_{2,1} f_{3,1}=0. 
$$
Therefore we have 
\begin{equation}\label{Ex5}
\| |\psi \rangle \|_{SS}^2=\cos^2(\theta)\cos^2(\eta)+
\end{equation}
$$\frac{1}{3} \left( \sin^2(\theta) \cos^2(\eta) + \cos^2(\theta) \sin^2(\eta) + 
2 \sin(\theta) \cos(\eta) \cos(\theta) \sin(\eta) \cos(\alpha) \right)+ \frac{1}{3} \sin^2(\theta) \sin^2(\eta) = 
$$
 $$
 \frac{1}{3}+ \frac{2}{3} \cos^2(\theta) \cos^2(\eta) +\frac{1}{6} \sin(2 \theta) \sin(2 \eta) \cos(\alpha)= 
 $$
 $$\frac{1}{3}+ \frac{2}{3} \left( \frac{1+\cos(2\theta)}{2}\right) \left( \frac{1+\cos(2\eta)}{2}\right) +\frac{1}{6} \sin(2 \theta) \sin(2 \eta) \cos(\alpha)=
 $$
 $$
 \frac{1}{2}+ \frac{1}{6} \left( \cos(2 \eta) + (1+\cos(2\eta))\cos(2 \theta) +  \sin(2 \theta) \sin(2 \eta) \cos(\alpha) \right). 
 $$
 Now fix $\eta$ and $\alpha$ and minimize this expression over $\theta$, using the fact that the minimum of $a \cos(x)+b\sin(x)$ is $-\sqrt{a^2+b^2}$. This gives 
 $$
 \| |\psi \rangle \|_{SS}^2 \geq \frac{1}{2} + \frac{1}{6} \left( \cos(2 \eta)-\sqrt{(1+\cos(2\eta))^2+ \sin^2(2 \eta) \cos^2(\alpha)} \right) \geq 
 $$
 $$\frac{1}{2} + \frac{1}{6} \left( \cos(2 \eta)-\sqrt{(1+\cos(2\eta))^2+ \sin^2(2 \eta) } \right)= 
 $$
$$
\frac{1}{2} +\frac{1}{6} \left( \cos(2 \eta)-\sqrt{2+2 \cos(2\eta)} \right)= \frac{1}{2} + \frac{1}{6} \left( 2 \cos^2(\eta) -1 -2|\cos(\eta)|\right)= 
\frac{1}{3}+ \frac{1}{3} \left( \cos^2(\eta) - |\cos(\eta) \right) \geq \frac{1}{4}, 
$$
where we used the fact that the minimum of $x^2-x$ in $[0,1]$ is $-\frac{1}{4}$. This bound of $\frac{1}{4}$ is achieved by setting in (\ref{Ex5}) $\alpha=0$, $\theta=\frac{2 \pi}{3}$, $\eta=\frac{\pi}{3}$.

\subsection{Case $n=3$; $d\geq 3$} 
In the case $n=3$ and $d \geq 3$, we use again formulas (\ref{accor}) and (\ref{normsquare}). By applying a common 
unitary following the discussion at the end of the previous section, and collecting a phase factor,  we can assume the $F$-matrix, without loss of generality, of the form  
\begin{equation}\label{newF}
F=\begin{pmatrix}
1 & 0 & 0 & 0 & \cdots & 0 \cr 
\cos(\theta) & \sin (\theta) & 0 & 0 & \cdots & 0 \cr 
\cos(\eta) & \sin(\eta) \cos(\gamma) e^{i \rho} & \sin (\eta) \sin(\gamma) & 0 & \cdots & 0 
\end{pmatrix}. 
\end{equation}
Consider now a $d$-tuple $[k_0,k_1,k_2,...,k_{d-1}]$ with one $k_l$ for $l \geq 3$ different from zero. Then every triple $(j_1,j_2,j_3)$ in the sum in (\ref{accor}) would have an element equal to $l$, which would make every term in the sum equal to zero  from the form of the matrix $F$ in (\ref{newF}). Similarly, every $d$-tuple such that $k_0=0$ will be such that $(j_1,j_2,j_3) \in G_{[k_0, k_1,...,k_{d-1}]}$ implies $j_1 \not=0$. Therefore, all the terms in the sum (\ref{accor}) will be zero, from the $F$-matrix in (\ref{newF}). Therefore the only components on Dicke states we have to consider are  
$$
\langle \phi_{[3,0,0,0,...,0]} |\psi \rangle= f_{1,0}f_{2,0} f_{3,0} =\cos(\theta) \cos(\eta), 
$$
$$
\langle \phi_{[2,1,0,0,...,0]} |\psi \rangle= \frac{1}{\sqrt{3}} \left(f_{1,0}f_{2,0} f_{3,1}+f_{1,0}f_{2,1} f_{3,0}+f_{1,1}f_{2,0} f_{3,0} \right) =\frac{1}{\sqrt{3}} \left( \cos(\theta) \sin(\eta) \cos(\gamma) e^{i \rho} +\sin(\theta) \cos(\eta)\right), 
$$
$$
\langle \phi_{[2,0,1,0,...,0]} |\psi \rangle= \frac{1}{\sqrt{3}} \left(f_{1,0}f_{2,0} f_{3,2}+f_{1,0}f_{2,2} f_{3,0}+f_{1,2}f_{2,0} f_{3,0} \right) = \frac{1}{\sqrt{3}} \cos(\theta) \sin(\eta) \sin(\gamma), 
$$
$$
\langle \phi_{[1,2,0,0,...,0]} |\psi \rangle= \frac{1}{\sqrt{3}} \left(f_{1,0}f_{2,1} f_{3,1}+f_{1,1}f_{2,1} f_{3,0}+f_{1,1}f_{2,0} f_{3,1} \right) = 
\frac{1}{\sqrt{3}} \sin(\theta) \sin(\eta) \cos(\gamma) e^{i \rho}, 
$$
$$
\langle \phi_{[1,1,1,0,...,0]} |\psi \rangle= \frac{1}{\sqrt{6}} \left(f_{1,0}f_{2,1} f_{3,2}+f_{1,0}f_{2,2} f_{3,1}+f_{1,1}f_{2,0} f_{3,2}
+f_{1,1}f_{2,2} f_{3,0}+ f_{1,2}f_{2,1} f_{3,0}+ f_{1,2}f_{2,0} f_{3,1}\right) =
$$
$$ \frac{1}{\sqrt{6}} \sin (\theta) \sin(\eta) \sin(\gamma). 
$$
We have 
$$
\| |\psi \rangle \|_{SS}^2=\cos^2(\theta) \cos^2(\eta)+\frac{1}{3} \cos^2(\theta) \sin^2(\eta) \cos^2(\gamma)+ \frac{1}{3} \sin^2(\theta) \cos^2(\eta)+ \frac{1}{6} \sin(2 \theta) \sin(2 \eta) \cos(\gamma) \cos(\rho)+$$
$$
\frac{1}{3} \cos^2(\theta) \sin^2(\eta) \sin^2(\gamma) 
+ \frac{1}{3} \sin^2(\theta)\sin^2(\eta)\cos^2(\gamma)+\frac{1}{6} \sin^2(\theta) \sin^2(\eta) \sin^2(\gamma)=
$$

$$
\cos^2(\theta) \cos^2(\eta)+\frac{1}{3} \cos^2(\theta) \sin^2(\eta)+\frac{1}{3} \sin^2(\theta)\cos^2(\eta) +
$$
$$
\frac{1}{6} \sin(2 \theta)\sin(2 \eta) \cos(\gamma) \cos(\rho) +\frac{1}{6} \sin^2(\theta) \sin^2(\eta) \cos^2(\gamma) + \frac{1}{6} \sin^2(\theta) \sin^2(\eta) =
$$
$$
\frac{1}{3}\cos^2(\theta) \cos^2(\eta)+\frac{1}{3} \cos^2(\theta) +\frac{1}{3} \cos^2(\eta) +
$$
$$
 \frac{1}{6} \sin^2(\theta) \sin^2(\eta) + \frac{1}{6}\left[  \sin(2 \theta)\sin(2 \eta) \cos(\gamma) \cos(\rho) + \sin^2(\theta) \sin^2(\eta) \cos^2(\gamma) \right]. 
$$

For a fixed $\theta$, $\eta$ and $\rho$, let us minimize the quantity in square brackets with respect to $\gamma$. In the simple case where $\sin(\theta)=0$ and/or $\sin(\eta)=0$, the quantity in square brackets is identically zero. Let us assume, without loss of generality that $\sin(\theta)=0$ since $\theta$ and $\eta$ are intercheangable. The function becomes equal to $\frac{1}{3}+\frac{2}{3} \cos^2(\eta)\geq \frac{1}{3}$ . 
Let us now consider the case where $\sin(\theta) \not=0$ and $\sin(\eta) \not=0$. A calculus exercise shows that the minimum of a function $f(x)=bx^2+ax$ with $b >0$ is $-\frac{a^2}{4b}$ if $\frac{|a|}{2b} < 1$ and $b-|a|$ if $\frac{|a|}{2b} \geq 1$. Applying this, to the function above with $b=\sin^2(\theta) \cos^2(\theta)$, $a=\sin(2\theta) \sin(2\eta) \cos(\rho)$, we have to consider two situations: 

\begin{enumerate}

\item 
$$
\frac{|a|}{2b} =\frac{|sin(2\theta) \sin(2\eta) \cos(\rho)|}{2 \sin^2(\theta) \sin^2(\eta)}=\frac{|2\cos(\theta) \cos(\eta) \cos(\rho)|}{|\sin(\theta) \sin(\eta)|} < 1$$ 
In this case, using the expression $-\frac{a^2}{4b}$ for the minimum of the function in parenthesis, we get, 
$$
\|\psi\|_{SS}^2 \geq \frac{1}{3} \cos^2(\theta) \cos^2(\eta)+\frac{1}{3} \cos^2(\eta)+\frac{1}{3} \cos^2(\theta) +\frac{1}{6} \sin^2(\theta) \sin^2(\eta) -\frac{1}{6} \left( \frac{\sin^2(2 \theta) \sin^2(2 \eta) \cos^2(\rho)}{4\sin^2(\theta) \sin^2(\eta)} \right) \geq 
$$
$$
-\frac{1}{3} \cos^2(\theta) \cos^2(\eta)+ \frac{1}{3} \cos^2(\eta) +\frac{1}{3} \cos^2(\theta)+\frac{1}{6} \sin^2(\theta) \sin^2(\theta)=
$$
$$
\frac{1}{6}+\frac{1}{6} \left(\cos^2(\eta) +\cos^2(\theta)-\cos^2(\theta) \cos^2(\eta) \right) \geq \frac{1}{6}. 
$$
Here, in the last inequality, we used the fact that the minimum of the function $f(x,y)=x+y-xy$ in the square $0 \leq x \leq 1$, $0 \leq y \leq 1$, is zero. Such a minimum is achieved with $x=y=0$ and therefore $\theta=\frac{\pi}{2}$, $\eta=\frac{\pi}{2}$. Notice that this is also consistent with the assumption $\frac{|a|}{2b}=\frac{|2\cos(\theta) \cos(\eta) \cos(\rho)|}{|\sin(\theta) \sin(\eta)|} <1$. 

\item 
\begin{equation}\label{latertoBus8}
\frac{|a|}{2b} =\frac{|sin(2\theta) \sin(2\eta) \cos(\rho)|}{2 \sin^2(\theta) \sin^2(\eta)}=\frac{|2\cos(\theta) \cos(\eta) \cos(\rho)|}{|\sin(\theta) \sin(\eta)|} \geq 1
\end{equation}
We have, using the value $b-|a|=\sin^2(\theta) \sin^2(\eta) -|\sin(2 \theta) \sin(2 \eta)\cos(\rho)|$ for the minimum, 
$$
\|\psi_{SS}\|^2 \geq \frac{1}{3} \cos^2(\theta) \cos^2(\eta) + \frac{1}{3} \cos^2(\eta)+ \frac{1}{3} \cos^2(\theta) + \frac{1}{3} \sin^2(\theta) \sin^2(\eta) - \frac{1}{6}|\sin(2\theta) \sin(2 \eta) \cos(\rho)|
$$
$$
\| |\psi \rangle \|_{SS}^2 \geq_{\rho=0} \frac{1}{3} \cos^2(\theta) \cos^2(\eta) + \frac{1}{3} \cos^2(\eta) + \frac{1}{3} \cos^2(\theta) +\frac{1}{3} \sin^2(\theta) \sin^2(\eta) -\frac{1}{6} |\sin(2 \theta) \sin(2 \eta)|. 
$$
Using Young inequality $|sin(2\theta) \sin(2 \eta)|\leq \frac{1}{2} \left( \sin^2(2 \theta) + \sin^2(2 \eta) \right)$ and trigonometric identities we get  
$$
\| |\psi \rangle \|_{SS}^2  \geq 
$$
$$
{\frac{1}{3}} \left[ \cos^2(\theta) \cos^2(\eta) + \cos^2(\eta) + \cos^2(\theta)+ \sin^2(\theta) \sin^2(\eta)  -\sin^2(\theta) \cos^2(\theta) - \sin^2(\eta) \cos^2(\eta) \right] =
$$
$$
\frac{1}{3}\left[ (\cos^2(\theta)+ \cos^2(\eta))^2+1 - (\cos^2(\eta) + \cos^2(\theta))\right]. 
$$
We then minimize  this subject to (\ref{latertoBus8}) which for $\rho=0$ gives 
\begin{equation}\label{toBus8}
3 \cos^2(\theta) \cos^2(\eta) \geq 1 - \cos^2(\eta) -\cos^2(\theta),  
\end{equation}
or, with notational changes, we consider the minimization of a function $f(x,y)=$ 

$\frac{1}{3} \left( (x+y)^2+1-x-y \right)$ subject to $0 \leq x,y, \leq 1$ and $3xy \geq 1-x-y$. 
This minimum is achieved for $x=y=\frac{1}{3}$ and it is equal to $\frac{7}{27}$.\footnote{There is no stationary point for the function in the allowed set since stationary points are where $x+y=\frac{1}{2}$ which is incompatible with the given constraints. So the minimum must be on the boundary of the set which is where $x \equiv 1$, $y \equiv 1$ or where $y=\frac{1-x}{3x+1}$ with $0 \leq x \leq 1$. With $x \equiv 1$ ($y\equiv 1$)  the minimum is for $y=0$ ($x=0$) and it is equal to $\frac{1}{3}$. For  $y=\frac{1-x}{3x+1}$ with $0 \leq x \leq 1$, the minimum is obtained for $x=\frac{1}{3}$ (which also gives $y=\frac{1}{3}$) and it is equal to $\frac{7}{27}$ which is smaller than $\frac{1}{3}$.}
\end{enumerate}

In conclusion, comparing all cases, the minimum is $\frac{1}{6}$.

\subsection{Summary} 

We summarize the  lower bounds we  found, which are all sharp,  in the following table 

\begin{center}
\begin{tabular}{ |c | c | c | }
\hline
 $\quad$ & $\quad$ & $\quad$ \\  
$ \| |\psi\rangle \|_{SS}^2 \leq $ & $ d=2 $ & $d \geq 3$  \\ 
 $\quad$ & $\quad$ & $\quad$ \\  
 \hline
  $\quad$ & $\quad$ & $\quad$ \\  
 $n=2$  & ${{\frac{1}{2}}}$  & $\frac{1}{2}$ \\ 
  $\quad$ & $\quad$ & $\quad$ \\  
 \hline 
 $\quad$ & $\quad$ & $\quad$ \\  
 $n=3$  & $\frac{1}{4}$  & $\frac{1}{6}$\\
  $\quad$ & $\quad$ & $\quad$ \\  
 \hline    
\end{tabular}
\end{center}.


\section{Extension to Mixed States}\label{MRex}

The extension of the  result of Theorem \ref{MainT}  to the case of mixed states and in a density matrix description of quantum states comes naturally  once we recall the appropriate definitions. Recall that a general (possibly mixed) quantum state can be described by a density matrix operator as a statistical mixture of pure states, i.e., (see, e.g., \cite{Sakurai}) 
\begin{equation}\label{statmix}
\rho:=\sum_s \lambda_s |\psi_s \rangle \langle \psi_s|, \quad \lambda_s > 0, \quad  \sum_s \lambda_s=1.  
\end{equation}
In this context,  $\rho$ is called {\it separable} if it is a statistical mixture of product states, that can be assumed, without loss of generality,  to be pure,  i.e., 
\begin{equation}\label{sepstate}
\rho=\sum_j {\omega_j} |\psi_{1,j} \rangle \langle \psi_{1,j}|  \otimes  |\psi_{2,j} \rangle \langle \psi_{2,j}|   \otimes \cdots \otimes  |\psi_{n,j} \rangle \langle \psi_{n,j}|=
\end{equation}
$$
\sum_j {\omega_j} \left( |\psi_{1,j} \rangle \otimes |\psi_{2,j} \rangle \otimes \cdots \otimes |\psi_{n,j} \rangle \right) 
\left( \langle \psi_{1,j} | \otimes \langle \psi_{2,j} | \otimes \cdots \otimes \langle \psi_{n,j} | \right) 
$$
with $\sum_j \omega_j=1,$ $\omega_j >0$. 
States $\rho$ that cannot be written as 
in (\ref{sepstate}) are called {\it entangled}. With these 
definitions, the extension of Theorem \ref{MainT} to general 
mixed states reads as follows. 
\begin{theorem}\label{MainTDM}
Assume for every state $|\phi\rangle$ in the symmetric sector $\rho |\phi\rangle=0$, or, equivalently $Tr(\Pi \rho)=0$ with $\Pi$ in (\ref{Proiezione}) (\ref{Proiezione2}). Then $\rho$ is 
an entangled state. 
\end{theorem}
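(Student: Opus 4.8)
The plan is to reduce the mixed-state statement to the pure-state Theorem \ref{MainT} by a one-line convexity argument, using only that $\Pi$ is an orthogonal projection (hence positive semidefinite) and that a separable $\rho$ has a decomposition into \emph{pure} product states as in (\ref{sepstate}). First I would record the equivalence of the two forms of the hypothesis: since $\Pi=\Pi^\dagger=\Pi^2$, we have $\mathrm{Tr}(\Pi\rho)=\mathrm{Tr}(\Pi\rho\Pi)$, and $\Pi\rho\Pi$ is positive semidefinite, so $\mathrm{Tr}(\Pi\rho)=0$ forces $\Pi\rho\Pi=0$, equivalently $\Pi\rho^{1/2}=0$, equivalently $\rho|\phi\rangle=0$ for every $|\phi\rangle$ in the symmetric sector. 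Thus either formulation of the hypothesis may be used interchangeably.

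Next I would argue by contradiction. Suppose $\rho$ is separable, so it admits a decomposition $\rho=\sum_j\omega_j\,|\psi_j\rangle\langle\psi_j|$ as in (\ref{sepstate}), with $\omega_j>0$, $\sum_j\omega_j=1$, and each $|\psi_j\rangle=|\psi_{1,j}\rangle\otimes\cdots\otimes|\psi_{n,j}\rangle$ a unit product vector. Then
\[
0=\mathrm{Tr}(\Pi\rho)=\sum_j\omega_j\,\langle\psi_j|\Pi|\psi_j\rangle=\sum_j\omega_j\,\big\||\psi_j\rangle\big\|_{SS}^2 .
\]
Every summand is nonnegative and every $\omega_j$ is strictly positive, so $\||\psi_j\rangle\|_{SS}^2=0$ for each $j$; that is, each pure product state $|\psi_j\rangle$ would have zero component on the symmetric sector. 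This contradicts Theorem \ref{MainT}, item 2, which asserts that every separable pure state has a nonzero orthogonal component on the symmetric sector. Hence $\rho$ cannot be separable, i.e., $\rho$ is entangled, which is the claim.

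I do not expect a substantive obstacle: the only points needing a little care are the first step — making precise that ``$\rho$ annihilates the symmetric sector'' coincides with the trace condition $\mathrm{Tr}(\Pi\rho)=0$ — and the use of the fact, built into the definition (\ref{sepstate}), that a separable density matrix can be written as a convex combination of \emph{pure} product states. Everything else is the routine convex-combination reduction to the already-established pure case. As a remark, the same computation yields more: if $\rho$ is separable then $\mathrm{Tr}(\Pi\rho)=\sum_j\omega_j\,\||\psi_j\rangle\|_{SS}^2\ge m_{n,d}$, where $m_{n,d}$ denotes the sharp lower bound over pure product states obtained in Section \ref{N23}; so the bounds of the summary table extend verbatim to all separable mixed states, and this is precisely what underlies the entanglement witnesses discussed in Section \ref{EW3}.
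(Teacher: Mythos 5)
Your proof is correct and follows essentially the same route as the paper's: decompose the separable $\rho$ into a convex combination of pure product states, observe that $\mathrm{Tr}(\Pi\rho)=\sum_j\omega_j\||\psi_j\rangle\|_{SS}^2=0$ forces each $\||\psi_j\rangle\|_{SS}^2=0$, and contradict Theorem \ref{MainT}. Your explicit verification that the two forms of the hypothesis coincide (via $\Pi\rho\Pi=0$) is a small addition the paper leaves implicit, but the substance of the argument is identical.
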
  
\begin{proof}
If for every $|\phi\rangle$ in the symmetric sector $\rho |\phi \rangle=0$ and $\rho$ is separable, using (\ref{sepstate}),  we get 
$ \sum_j \omega_j \left| \langle \phi | \psi_{1,j} \rangle \otimes |\psi_{2,j} \rangle \otimes \cdots \otimes |\psi_{n,j} \rangle \right|^2=0$ 
which implies, for every $j$, 
$ \left| \langle \phi | \psi_{1,j} \rangle \otimes |\psi_{2,j} \rangle \otimes \cdots \otimes |\psi_{n,j} \rangle \right|^2=0$, 
which is impossible according to Theorem \ref{MainT}. 
\end{proof} 
Let us consider again $\Pi$,  the orthogonal projection onto the symmetric sector as defined in (\ref{Proiezione}), (\ref{Proiezione2}).  
If we have a lower bound $B$ on the component square  of any separable state  on the symmetric sector, we have that, for any separable state $\rho$, 
$\rho:=\sum_j \omega_j |\psi_j \rangle \langle \psi_j |$, 
$$
Tr(\Pi \rho) = \sum_j \omega_j \langle \psi_j | \Pi |\psi_j \rangle \geq B.  
$$
 Therefore, for any separable state $\rho$ we have (${\bf 1}$ denotes the identity)  
 $$
 Tr \left[ \left( \Pi -B{\bf 1} \right) \rho  \right] =Tr(\Pi \rho) - B \geq 0. 
 $$ 
 Therefore $W:=\Pi-B{\bf 1}$ is an {\it entanglement witness} \cite{EWreview}  for multipartite systems on $(\mathbb{C}^d)^{\otimes n}$ (see, e.g., \cite{NC}) that is an operator $W$ such that $Tr(W \rho) \geq 0$ for any separable state $\rho$. The condition $Tr(W \rho) < 0$ therefore indicates that $\rho$ is an entangled state.

\section{Entanglement witnesses and PPT states}\label{EW3}

It is interesting to know whether the above  class of entanglement witnesses is able to detect entangled states which are not detected by the popular 
{\it Peres-Horodecki Positive Partial Transposition  (PPT) criterion} (see, e.g., \cite{Hororev}). In this section we explore this issue after setting 
up the appropriate definitions. We shall show with an example for a system of three qubits that this is actually the case. 

The  PPT criterion of entanglement was originally proposed for {\it bipartite} quantum systems (of possibly different dimensions) and it turned out to be not only avery easy to use criterion but also a very powerful one. In the bipartite case it  detects  {\it all} entangled states when the product of the dimensions of the two subsystems is less than or equal to $6$  \cite{90}. We discuss it here in the context of a (natural) extension to the multipartite case, assuming, as in the rest of the paper, that all the $n$ subsystems have the same dimension $d$. Let ${\cal I}$ be a subset of ${\cal N}:=\{1,...,n\}$, and denote by $\rho^{\Gamma_{\cal I}}$ the partial transpose of $\rho$ with respect to the factors labeled by the elements in ${\cal I}$.\footnote{That is, expand $\rho$ as a linear combination of tensor products and then take the transpose of only the elements in the products with positions given by the indexes in ${\cal I}$}
The state $\rho$ is called {\it ${\cal I}$-PPT} if  $\rho^{\Gamma_{\cal I}} \geq 0$, that is,  $\rho^{\Gamma_{\cal I}}$ is a state as well (since the trace equal to 1 condition is automatically satisfied). Notice that since 
$(\rho^{\Gamma_{\cal I}})^T=\rho^{\Gamma_{{\cal N}-{\cal I}}} $, $\rho$ is ${\cal I}$-PPT if and only if it is  (${\cal N}-{\cal I}$)-PPT. Every separable state $\rho$ is ${\cal I}$-PPT, no matter what ${\cal I}$ is. Therefore if there exists a set ${\cal I}$ such that $\rho$ is not ${\cal I}$-PPT, $\rho $ is entangled. In our multipartite setting, we shall call PPT states, states that are ${\cal I}$-PPT for any ${\cal I}$. Separable states are PPT. As stated in \cite{EWreview}, `the hard problem in entanglement theory is to detect entangled PPT states'. Notice that,  to check that a state is PPT,  we need to check that it is ${\cal I}$-PPT for any ${\cal I} \subseteq {\cal N}$, that is,  for $2^{n}$ combinations. However  ${\cal I}=\{ \emptyset \}$ and ${\cal I}={\cal N}$ do not need to be checked (since $\rho \geq 0$ and $\rho^T \geq 0$) and ${\cal I}$ and ${\cal N}-{\cal I}$ give the same result. Thus,  we need to check a total of $2^{n-1}-1$ subsets in ${\cal N}$, which reduces to the known  single test in the bipartite case.

We now present a class of PPT states which is detected by the entanglement witnesses found in this paper. We consider three qubits ($n=3$, $d=2$) and the class of {\it generalized Werner states} 
\begin{equation}\label{Werner}
\rho=\frac{p}{4} \Pi+ \frac{1-p}{4}\left( {\bf 1}-\Pi \right),  
\end{equation} 
where $\Pi$ is the projection onto the symmetric sector and $p \in [0, 1]$.    According to the results of section \ref{N23}, $\Pi-\frac{1}{4}{\bf 1}$ is an entanglement witness for the system $n=3$ $d=2$, which applied to the states $\rho$ in (\cite{WernerC}) implies that $\rho$ is entangled for $p< \frac{1}{4}$. When calculating $\rho^{\Gamma_{\cal I}}$ for $\rho$ in (\ref{Werner}) and ${\cal I}$, to check whether $\rho$ is PPT, we only need to check for $|{\cal I}|=1$ since $\rho^{\Gamma_{{\cal N}-{\cal I}}} \geq 0 \leftrightarrow \rho^{\Gamma_{{\cal I}}} \geq 0$. Furthermore, let $P$ any permutation matrix. Since $\rho$ commutes with $P$ we have that for ${\cal I}_1$ and ${\cal I}_2$ of cardinality $1$, 
\begin{equation}\label{riduzione}
\rho^{\Gamma_{{\cal I}_2}}=P\rho^{\Gamma_{{\cal I}_1}}P^T
\end{equation}
 for some permutation $P$.\footnote{In general for ${\cal I}_1$ and ${\cal I}_2$ of the same cardinality we have for an appropriate permutation matrix $P$,  $\rho^{\Gamma_{{\cal I}_2}}=P \left( P^T \rho P \right)^{\Gamma_{{\cal I}_1}} P^T$ which reduces to (\ref{riduzione}), if $\rho$ commutes with $P$ as in our case.} Therefore, we need to check the PPT property only for one set ${\cal I}$ with cardinality $1$ and we take,  without loss of generality,  ${\cal I}=\{ 1\}$, denoting simply by $\Gamma$ the corresponding $\Gamma_{\cal I}$.  
  Applying (\ref{Proiezione2})  this to the case $n=3$, $d=2$ leads to 
  \begin{equation}\label{pi8}
  \Pi=\frac{1}{6} \begin{pmatrix} 6 & 0 & 0 & 0 & 0 & 0 & 0 & 0 \cr 
  0 & 2 &2 & 0 & 2 & 0 & 0 & 0 \cr  
  0 & 2 &2 & 0 & 2 & 0 & 0 & 0 \cr 
  0 & 0 & 0 & 2 & 0 & 2 & 2 & 0 \cr 
  0 & 2 & 2 & 0 & 2 & 0 & 0 & 0 \cr 
  0 & 0 & 0 & 2 & 0 & 2 & 2 & 0 \cr
   0 & 0 & 0 & 2 & 0 & 2 & 2 & 0 \cr 
   0 & 0 & 0 & 0 & 0 & 0 & 0& 6  \end{pmatrix}. 
  \end{equation}
  Partial transposition with respect to the first factor corresponds to swapping the off diagonal $4 \times 4$ blocks of a partitioned matrix 
  $\begin{pmatrix} A& B \cr C & D \end{pmatrix} \rightarrow  \begin{pmatrix} A& C \cr B& D \end{pmatrix} $. Using this and  (\ref{pi8}) in (\ref{Werner}), we obtain 
  \begin{equation}\label{afterPT}
  \rho^{\Gamma}=\frac{p}{24} 
   \begin{pmatrix}6 & 0 & 0 & 0 & 0 & 2 & 2 & 0 \cr
  0 & 2 & 2 & 0 & 0 & 0 & 0& 2 \cr
  0 & 2 & 2 & 0 & 0 & 0 & 0& 2 \cr
  0  & 0 & 0 & 2 & 0 & 0 & 0 & 0 \cr 
   0  & 0 & 0 & 0 & 2 & 0 & 0 & 0 \cr 
  2 & 0 & 0 & 0 & 0 & 2 & 2 & 0 \cr 
    2 & 0 & 0 & 0 & 0 & 2 & 2 & 0 \cr 
    0 & 2 & 2 & 0 & 0 & 0 & 0 &  6\end{pmatrix}
    + \frac{1-p}{24} \begin{pmatrix}0 & 0 & 0 & 0 & 0 & -2 & -2 & 0 \cr 
    0 & 4 & -2 & 0 &   0 & 0 & 0 & {-2} \cr
    0 & -2 & 4 & 0 & 0 & 0 & 0 & -2 \cr  
    0 & 0 & 0 & 4 & 0 & 0 & 0 & 0 \cr 
    0 & 0 & 0 & 0 & 4 & 0 & 0 & 0 \cr
    {-2} & 0 & 0 & 0 & 0 & 4 & {-2} & 0 \cr 
    {-2} & 0 & 0 & 0 & 0 & {-2} & 4 & 0 \cr
    0 & {-2} & {-2} & 0 & 0 & 0 & 0 & 0 \end{pmatrix}. 
   \end{equation}
   Application of the Sylvester criterion (see, e.g., \cite{HJ}) shows that $\rho^\Gamma$ is positive definite if and only  if $p > 0.2$.  Therefore $\rho$ is PPT for every $p > 0.2$. If $0.25 > p > 0.2$, the entanglement witnesses proposed in this paper detect entangled PPT Werner  states in (\ref{Werner}). 
   
   \begin{remark}\label{WernerR}Werner states were introduced in \cite{WernerC} for the $n=2$ case and $d$ arbitrary and considered in \cite{EWreview} (Example 3.2). It is  known that they are separable if and only if they are PPT. Our example shows that this property does not extend to the $n=3$ case. 
     \end{remark}
   
   \begin{remark}\label{deco} In the bipartite case, an entanglement witness $W$ is called {\it decomposable} if there exist positive semidefinite Hermitian matrices $X$ and $Y$ such that $W=X^{\Gamma}+Y$. Such definition naturally extends to the multipartite case by saying that $W$ is decomposable iff there exists a set ${\cal I} \subseteq {\cal N}$ and positive semidefinite $X$ and $Y$ such that  $W=X^{\Gamma_{\cal{I}}}+Y$. 
   Decomposable entanglement witness are not able to detect entangled PPT states because if $\rho^{\Gamma_{\cal I}}\geq 0$,\footnote{Here we use the property of the partial transposition $Tr(A^{\Gamma_{\cal I}}B)=Tr(A B^{\Gamma_{\cal I}})$.}
   $$
   Tr\left( W \rho \right)=Tr\left( ((X^{\Gamma_{\cal I}}+Y)\rho \right)= Tr \left( X^{\Gamma_{\cal I}} \rho\right)+Tr \left(  Y \rho\right)=
   Tr (X \rho^{\Gamma_{\cal I}})+ Tr(Y\rho) \geq 0. 
   $$
   Therefore, the entanglement witnesses proposed here are, in general, not decomposable in the sense of the above definition.\footnote{One can show them to be decomposable in the $n=2$, bipartite case \cite{EWreview}.}
   
   \end{remark}


\section*{Acknowledgement} 

This material is based upon work supported by, or in part by, the U. S. Army Research Laboratory and the U. S. Army Research Office under contract/grant number W911NF2310255.

\appendix

\section{Bound for $\| \langle \psi\|^2_{SS}$ in the case $n=4$, $d=2$} 
We consider in this appendix a numerical calculation of the minimum for $\||\psi \rangle \|_{SS}^2$ for the case of $n=4$ qubits ($d=2$). Besides the interest in finding such a bound, which, as we have seen in the main text, gives an entanglement witness for such systems, the treatment we shall present also gives us an example of using the expression $\| |\psi \rangle \|_{SS}^2=\langle \psi | \Pi |\psi \rangle $ with $\Pi$ in (\ref{Proiezione2}) to set up the function to minimize. This  requires some knowledge of the structure of the symmetric group $S_4$. 

The group $S_4$ contains $24$ elements which are usually described as (products of) elementary cycles:  the identity ${\bf 1}$, six simple transpositions (cycles with length $2$), three double transpositions, $(1\,2)(3\,4)$, $(1\,3)(2\,4)$, $(1\,4)(2\,3)$, eight cycles of length three, six cycles of length four. 
In the following subsections, we consider these classes of permutations and their contributions to $\||\psi\rangle \|_{SS}^2$. For every permutation $S$ in one class the inverse is in the same class,  and this  guarantees that contributions from each class are real. As done in the main text in section \ref{N23}, we can apply a common unitary transformation and collect a common phase factor which leads us to  assume, without loss of generality, that  $|\psi \rangle=|\psi_1\rangle \otimes |\psi_2\rangle \otimes  |\psi_3 \rangle \otimes |\psi_4  \rangle$, with 
$$
|\psi_1 \rangle=|0\rangle, \quad  |\psi_2 \rangle =\cos(\theta_2) |0 \rangle+ \sin(\theta_2)|1\rangle, \quad 
$$
$$
|\psi_3 \rangle =\cos(\theta_3) |0 \rangle+ e^{i \alpha_3}\sin(\theta_3)|1\rangle, \quad  |\psi_4 \rangle= \cos(\theta_4) |0 \rangle+ e^{i \alpha_4}\sin(\theta_4)|1\rangle
$$
In the following we list the contributions to the sum (\ref{anotherway}) given by the various sets of transformations in $S_4$. 

\subsection{Identity}
The identity ${\bf 1}$ gives a contribution $\langle \psi | {\bf 1} |\psi \rangle=1$. 
\subsection{Six transpositions} 
Consider the  six transpositions $(1\,2)$, $(1\,3)$, $(1\,4)$, $(2\,3)$, $(2\,4)$, $(3\,4)$. The transposition $(j\,k)$, gives a contribution $|\langle \psi_j |\psi_k \rangle|^2$. We have therefore the following transpositions  with the associated contributions 
$$
(1\,2) \leftrightarrow \cos^2(\theta_2), \quad (1\,3) \leftrightarrow \cos^2(\theta_3), \quad (1\,4) \leftrightarrow \cos^2(\theta_4),$$
$$
 (2\,3) \leftrightarrow \cos^2(\theta_2) \cos^2(\theta_3) +\sin^2(\theta_2) \sin^2(\theta_3) +\frac{1}{2} \sin(2\theta_2) \sin(2 \theta_3) \cos(\alpha_3), 
$$
$$
(2\,4) \leftrightarrow \cos^2(\theta_2) \cos^2(\theta_4) +\sin^2(\theta_2) \sin^2(\theta_4) +\frac{1}{2} \sin(2\theta_2) \sin(2 \theta_4) \cos(\alpha_4), 
$$
$$
(3\,4) \leftrightarrow \cos^2(\theta_3)\cos^2(\theta_4)+\sin^2(\theta_3) \sin^2(\theta_4) +\frac{1}{2} \sin(2\theta_3)\sin(2\theta_4) \cos(\alpha_3-\alpha_4)
$$ 
\subsection{Three double transpositions} 
For the three double transpositions the contributions to the sum (\ref{anotherway}) are 
$$
(1\,2)(3\,4) \leftrightarrow \cos^2(\theta_2) \cos^2(\theta_3) \cos^2(\theta_4)+ \cos^2(\theta_2) \sin^2(\theta_3) \sin^2(\theta_4) +\frac{\cos^2(\theta_2)}{2} \sin(2 \theta_3) \sin(2 \theta_4) \cos(\alpha_3-\alpha_4), 
$$
$$
(1\,3)(2\,4) \leftrightarrow \cos^2(\theta_3) \cos^2(\theta_2) \cos^2(\theta_4)+ \cos^2(\theta_3) \sin^2(\theta_2) \sin^2(\theta_4) +\frac{\cos^2(\theta_3)}{2} \sin(2 \theta_2) \sin(2 \theta_4) \cos(\alpha_4)
$$
$$
(1\,4)(2\,3) \leftrightarrow \cos^2(\theta_4) \cos^2(\theta_2) \cos^2(\theta_3)+ \cos^2(\theta_4) \sin^2(\theta_2) \sin^2(\theta_3) +\frac{\cos^2(\theta_4)}{2} \sin(2 \theta_2) \sin(2 \theta_3) \cos(\alpha_3), 
$$

\subsection{Eight 3-cycles} 

It is convenient to group the eight three cycles in pairs where one cycle is the inverse of the other. The corresponding contributions to the sum in (\ref{Proiezione2}) are one the complex conjugate of the other and thus are grouped together to give a real value. The four pairs are 
$(1\,2\,3) + (2\,1\,3)$, $(1\,2\,4) + (2\,1\,4)$, $(1\,3\,4) + (3\,1\,4)$, $(2\,3\,4) + (3\,2\,4)$. Let us calculate the contribution of the first pair, $(1\,2\,3) + (2\,1\,3)$, which is, 
$$
\langle \psi_2|\psi_1 \rangle \langle \psi_3|\psi_2 \rangle\langle \psi_1 |\psi_3\rangle +\langle \psi_1|\psi_2 \rangle \langle \psi_3|\psi_1 \rangle\langle \psi_2 |\psi_3 \rangle= 2 \cos^2(\theta_2)\cos^2(\theta_3)+\frac{1}{2} \sin(2 \theta_2) \sin(2 \theta_3) \cos(\alpha_3). 
$$
Analogously, we can obtain the contributions of the other pairs which we list below 
$$
(1\,2\,4) + (2\,1\,4) \leftrightarrow 2 \cos^2(\theta_4) \cos^2(\theta_2) +\frac{1}{2} \sin(2\theta_2) \sin(2 \theta_4) cos(\alpha_4), 
$$
$$
(1\,3\,4) + (3\,1\,4) \leftrightarrow 2 \cos^2(\theta_3)\cos^2(\theta_4) +\frac{1}{2} \sin(2 \theta_3) \sin(2 \theta_4) cos(\alpha_3-\alpha_4),  
$$
$$
(2\,3\,4)+ (3\,2\,4)  \leftrightarrow 2 \cos^2(\theta_2) \cos^2(\theta_3) \cos^2(\theta_4) + 2 \sin^2(\theta_2) \sin^2(\theta_3) \sin^2(\theta_4) + 
\frac{1}{2} \sin(2 \theta_2) \sin(2 \theta_4) \cos(\alpha_4) + 
$$
$$
\frac{1}{2} \sin ( 2 \theta_2) \sin(2 \theta_3) \cos(\alpha_3) + \frac{1}{2} \sin(2 \theta_3) \sin( 2 \theta_4) \cos (\alpha_3-\alpha_4) 
$$

\subsection{Six 4-cycles}It is convenient to group the three $4$-cycles in pairs where one cycle is the inverse of the other so that the corresponding contributions are one the complex conjugate of the other, giving a  real value. The three four cycles pairs are $(1\,2\,3\,4)+(4\,3\,2\,1)$, $(1\,2\,4\,3)+(3\,4\,2\,1)$, $(1\,3\,2\,4)+(4\,2\,3\,1)$. Let us calculate the contribution of the pair  $(1\,2\,3\,4)+(4\,3\,2\,1)$, which is 
$$
2 Re\left( \langle \psi_1 |\psi_2 \rangle \langle \psi_2 | \psi_3 \rangle \langle \psi_3 | \psi_4 \rangle \langle \psi_4 | \psi_1 \rangle\right)= 
$$
$$
2 \cos^2(\theta_2) \cos^2(\theta_3) \cos^2(\theta_4) + \frac{\cos^2(\theta_2)}{2} \sin(2 \theta_4) \sin(2 \theta_3) \cos(\alpha_4-\alpha_3)+
$$
$$ \frac{\cos^2(\theta_4)}{2} \sin(2 \theta_2) \sin(2 \theta_3) \cos(\alpha_3)+ \frac{\sin^2(\theta_3)}{2} \sin(2 \theta_2) \sin(2\theta_4) \cos(\alpha_4). 
$$
We also have the contributions 
$$
(1\,2\,4\,3)+(3\,4\,2\,1) \leftrightarrow 2 \cos^2(\theta_2)\cos^2(\theta_3) \cos^2(\theta_4)+ \frac{\cos^2(\theta_2)}{2} \sin(2\theta_4) \sin(2\theta_3) \cos(\alpha_3-\alpha_4)+ 
$$
$$\frac{\cos^2(\theta_3)}{2} \sin(2 \theta_2) \sin(2\theta_4) \cos(\alpha_4)+ \frac{\sin^2(\theta_4)}{2} \sin(2\theta_2) \sin(2\theta_3) \cos(\alpha_3), 
$$
and
$$
(1\,3\,2\,4)+(4\,2\,3\,1) \leftrightarrow 2 \cos^2(\theta_2) \cos^2(\theta_3) \cos^2(\theta_4) + \frac{\cos^2(\theta_3)}{2} \sin(2 \theta_2) \sin(2\theta_4) \cos(\alpha_4)+
$$
$$
\frac{\cos^2(\theta_4)}{2} \sin(2\theta_2) \sin(2\theta_3)\cos(\alpha_3)+ \frac{\sin^2(\theta_2)}{2} \sin(2 \theta_3) \sin(2\theta_4) \cos(\alpha_4-\alpha_3)
$$
\subsection{Putting all together}
Let us sum all the contributions listed in the previous subsections. We shall write $4! \langle \psi | \Pi|\psi \rangle$ as 
$4! \langle \psi | \Pi|\psi \rangle =R+ A\left(\alpha_3, \alpha_4 \right)$ where $R$ and $A$ are the terms that do not, respectively, do depend on the $\alpha$'s, i.e.,  $\alpha_3$ and $\alpha_4$. Direct calculations and the use of standard trigonometric identities (by expressing everything in terms of the cosines only) lead to (use the shorthand notation $c_j:=\cos(\theta_j)$)
\begin{equation}\label{erre}
R=4-c_2^2-c_3^2-c_4^2+3c_2^2c_3^2+3c_2^2c_4^2+3c_3^2c_4^2+13c_2^2c_3^2c_4^2,
\end{equation}
\begin{equation}\label{A}
A=\cos(\alpha_3)(2+c_4^2)\sin(2\theta_2) \sin(2\theta_3)+\cos(\alpha_4)(2+c_3^2)\sin(2\theta_2)\sin(2\theta_4)+
\end{equation}
$$
 \cos(\alpha_3-\alpha_4)(2+c_2^2)\sin(2\theta_3) \sin(2\theta_4)
$$

\subsection{Minimizing with respect to $\alpha_4$'s}
We first minimize the part $A$ with respect to $\alpha_4$. Set 
\begin{equation}\label{abc}
a:=\sin(2\theta_2)\sin(2\theta_3)(2+c_4^2), \quad b:=\sin(2\theta_2) \sin(2\theta_4)(2+c_3^2), \quad c=\sin(2\theta_3) \sin(2 \theta_4)(2+c_2^2), 
\end{equation}
so that $A=a\cos(\alpha_3)+b \cos(\alpha_4)+c \cos(\alpha_3-\alpha_4)$. Minimizing with respect to $\alpha_4$,  using the fact that the minimum over $\alpha$ of 
$k_1 \cos(\alpha)+k_2 \sin(\alpha)$ is $-\sqrt{k_1^2+k_2^2}$  and standard trigonometric identities,  we can replace $A$ with 
$
a \cos(\alpha_3)-\sqrt{b^2+c^2+2bc \cos(\alpha_3)}
$. In our case, from (\ref{abc}),  $a$ and $bc$ have the same sign and since $\cos(\alpha_3)$ only multiplies these two terms we can assume without loss of generality that $a \geq 0$, $bc \geq 0$, and also replace $b$ with $|b|$ and $c$ with $|c|$. 
Notice also that the sign of $a$ does not influence the $R$ part of the function to minimize. 
In the following,  it is convenient to write $a$, $b$ and $c$ in (\ref{abc}) and the function to be minimized in terms of the variables $x:=c_2^2$, $y:=c_3^2$, $z:=c_4^2$, with $x,y,z \in [0,1]$, and the cases $\sin(2\theta_{2,3,4})=0$ corresponding  to $x,y,z=0,1$. 
With this definition, we write, $a=a(x,y,z)$, $b=b(x,y,z)$, and $c=c(x,y,z)$ as 
\begin{equation}\label{abcnew}
a:=4 \sqrt{(x-x^2)(y-y^2)}(2+z), \quad  b=4\sqrt{(x-x^2)(z-z^2)}(2+y), \quad c=4\sqrt{(y-y^2)(z-z^2)} (2+x). 
\end{equation}
We also write $\alpha_3:=2w \pi$, with $w \in [0,1]$.

\subsection{Numerical Minimization}
With the above notations and using the expression $R$ in (\ref{erre}), we need to minimize over $x,y,z,w$ in the box $[0,1]^4$, the function 
$$
f(x,y,z,w)=4-x-y-z+3xy+3xz+3yz+13xyz+a\cos(2w\pi)-\sqrt{b^2+c^2+2bc \cos(2w\pi)}. 
$$
We report below a MATLAB code that carries out such a minimization searching on a 
grid in $[0,1]^4$. Our program is an exhaustive search over a grid with stepsize $10^{-3}$. More sophisticated optimization methods can definitely be used although their use is  complicated by the fact that the function $f$ is not convex. We initialize the minimum of the function (the variable  `mini'  ) at the value achieved for $(x,y,z,w)=(0,0,0,0)$, which is $4$. 

\begin{verbatim}
mini=4
for x=0:0.001:1, 
  for y=0:0.001:1, 
   for z=0:0.001:1, 
    for w=0:0.001:1, 
      a=4*(2+z)*sqrt((x-x^2)*(y-y^2)); 
      b=4*(2+y)*sqrt((x-x^2)*(z-z^2)); 
      c=4*(2+x)*sqrt((y-y^2)*(z-z^2)); 
      curr=4-x-y-z+3*x*y+3*x*z+3*y*z+13*x*y*z+a*cos(2*w*pi)-sqrt(b^2+c^2+2*b*c*cos(2*w*pi)
      if curr < mini, 
         mini=curr
      end
   end
  end
 end
end
\end{verbatim}
The approximate minimum calculated with this routine is $\approx 1.3572$ therefore an approximate lower bound for $\||\psi\rangle \|_{SS}$ is $\approx \frac{1.3572}{24}$. 
\end{document}